\newtheorem{theorem}{Theorem}[section]
\newtheorem{corollary}[theorem]{Corollary}
\newcommand{\tr}{{\rm Tr\hskip -0.2em}~}
          \newcommand{\df}[2]{\frac{d#1}{d#2}}
\begin{document}

\title{Reduced relative quantum entropy}
\author{Frank Hansen}
\date{September 11, 2022}

\maketitle

\begin{abstract} We introduce the notion of reduced relative quantum entropy and prove that it is convex.
This result is then used to give a simplified proof of a theorem of Lieb and Seiringer\\[1ex]
{\bf MSC2020} 47A63, 15A45\\[1ex]
{\bf{Key words and phrases:}}  Reduced relative quantum entropy; concavity theorems; Golden-Thompson's trace inequality.
\end{abstract}

\section{Reduced relative entropy}

Let $ H $ be a contraction. We introduce the reduced relative quantum entropy $ S_H(A,B) $ for positive definite matrices $ A $ and $ B $ by setting
\[
S_H(A\mid B)=\tr\bigl(A\log A-H^*AH\log B-A+B\bigr).
\]
Note that for $ H=I $ (the identity matrix) we recover the relative quantum entropy
\[
S(A\mid B)=S_I(A\mid B)=\tr\bigl(A\log A-A\log B-A+B\bigr).
\]

\begin{theorem}
The reduced relative quantum entropy is convex.
\end{theorem}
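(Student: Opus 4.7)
The strategy is to deduce joint convexity of $S_{H}$ from Lieb's concavity theorem together with the operator convexity of $x\mapsto x\log x$ on $(0,\infty)$. Since $\tr(B-A)$ is affine in $(A,B)$, it is enough to prove joint convexity of
\[
\Phi(A,B)\;:=\;\tr(A\log A)-\tr(H^{*}AH\log B).
\]
The key algebraic manoeuvre is to split $\Phi$ using $I=HH^{*}+(I-HH^{*})$, which is precisely where the contraction hypothesis $HH^{*}\le I$ enters:
\[
\Phi(A,B)\;=\;\tr\bigl((I-HH^{*})A\log A\bigr)\;+\;\bigl[\tr(HH^{*}A\log A)-\tr(H^{*}AH\log B)\bigr].
\]

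For the first summand I would invoke operator convexity of $x\log x$ on $(0,\infty)$: for any operator convex $f$ and any $C\ge 0$, the function $A\mapsto\tr(Cf(A))$ is convex, as one sees by sandwiching the operator inequality $f(\lambda A_{1}+(1-\lambda)A_{2})\le\lambda f(A_{1})+(1-\lambda)f(A_{2})$ by $C^{1/2}$ and taking the trace. Applied with $C=I-HH^{*}\ge 0$ and $f(x)=x\log x$, this handles the first summand.

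For the second summand I would use Lieb's concavity theorem, which asserts that for each $p\in[0,1]$ the function $f_{p}(A,B):=\tr(H^{*}A^{1-p}HB^{p})$ is jointly concave in $(A,B)$ on pairs of positive definite matrices. Since $f_{0}(A,B)=\tr(H^{*}AH)$ is affine in $(A,B)$, the difference quotients $p^{-1}(f_{p}-f_{0})$ are jointly concave for every $p>0$, and so is their pointwise limit as $p\to 0^{+}$, namely
\[
\frac{d}{dp}\bigg|_{p=0} f_{p}(A,B)\;=\;\tr(H^{*}AH\log B)-\tr(HH^{*}A\log A).
\]
This is exactly the negative of the bracketed expression above, so the second summand is jointly convex. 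Adding the two convex pieces and the affine term $\tr(B-A)$ yields joint convexity of $S_{H}(A\mid B)$. The main point demanding care is the limiting step: that pointwise limits of jointly concave functions on the convex cone of positive definite pairs are again concave, and that the formal derivative truly coincides with the limit of the difference quotients. Both are routine verifications, so the real content of the proof lies in the algebraic splitting that converts the contraction hypothesis into two separately tractable terms.
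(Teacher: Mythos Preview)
Your proof is correct and follows essentially the same route as the paper's: you apply Lieb's concavity theorem to $\tr(H^{*}A^{1-p}HB^{p})$, differentiate at $p=0$ via the difference quotient to obtain joint concavity of $\tr(H^{*}AH\log B)-\tr(HH^{*}A\log A)$, and then handle the leftover term $\tr\bigl((I-HH^{*})A\log A\bigr)$ using the operator convexity of $x\log x$ together with $HH^{*}\le I$. The paper's argument is identical up to the order in which the splitting and the limiting step are presented.
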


\begin{proof}

Let $ H $ be a contraction, and let $ A $ and $ B $ be positive definite matrices. By Lieb's concavity theorem  \cite[Corollary 1.1]{lieb:1973:1}, cf. also  \cite[Corollary 2.2]{hansen:2006:3},  the function
\[
(A,B)\to \tr H B^p H^* A^{1-p}\qquad 0\le p\le 1
\]
is concave. The function
\[
(A,B)\to \frac{\tr HB^pH^* A^{1-p}-\tr HH^*A}{p}
\]
is therefore also concave for $ 0<p\le 1. $ By letting $ p $ tend to zero we obtain that the limit function
\[
(A,B)\to \df{}{p} \tr HB^pH^* A^{1-p}\Bigr|_{p=0}
\]
is concave. Since by calculation
\[
 \df{}{p} \tr HB^pH^* A^{1-p}=\tr\bigl(HB^p \log (B)H^* A^{1-p}-HB^pH^*A^{1-p}\log A\big)
 \]
we obtain
\[
 \df{}{p} \tr HB^pH^* A^{1-p}\Bigr|_{p=0}=\tr\bigl(H\log (B) H^*A -HH^*A\log A\bigr).
\]
The map
\begin{equation}\label{first concave map}
(A,B)\to \tr\bigl(H^*AH\log B -HH^*A\log A\bigr)
\end{equation}
is thus concave, where we used the cyclicity of the trace. We then write the reduced quantum relative entropy on the form
\[
\begin{array}{l}
S_H(A\mid B)=\tr\bigl(A\log A-H^*AH\log B-A+B\bigr)\\[2ex]
=-\tr\bigl(H^*AH\log B-HH^*A\log A\bigr)+\tr\bigl((I-HH^*)A\log A-A+B\bigr).
\end{array}
\]
The first term is convex by concavity of the map in (\ref{first concave map}). The second term is 
convex since $ HH^*\le I $ and $ A\to A\log A $ is convex. 
\end{proof}

The relative quantum entropy $ S(A\mid B) $ is non-negative with equality for $ A=B. $ Therefore,
\[
\tr B=\max_{X>0}\tr(X\log B-X\log X+X)
\]
for any positive operator $ B, $ cf. \cite[Lemma 6]{tropp:2012:2}. If we to a contraction $ H $ put 
\[
B=\exp\bigl(L+H^*\log(A)H\bigr),
\]
where $ L $ is self-adjoint, we obtain
\[
\begin{array}{l}
\tr \exp\bigl(L+H^*\log(A)H\bigr)\\[2ex]
=\displaystyle\max_{X>0} \tr\bigl( X(L+H^*\log(A)H)-X\log X+X\bigr)\\[2ex]
=\displaystyle\max_{X>0}\bigl\{-\tr(X\log X-HXH^*\log A-X+A)+\tr (XL+A)\bigr\}\\[2ex]
=\displaystyle\max_{X>0}\bigl\{-S_{H^*}(X\mid A)+\tr(XL+A)\bigr\}.
\end{array}
\]
Since the function
\[
(X,A)\to -S_{H^*}(X\mid A)+\tr(XL+A)
\]
is jointly concave, it follows by a well-known theorem \cite[Lemma 2.3]{carlen:2008} that the partial maximisation  over $ X $ is concave. We have thus proved.

\begin{theorem}\label{second concave map}
Let $ H $ be a contraction and $ L $ self-adjoint. The function
\[
\varphi(A)=\tr \exp\bigl(L+H^*\log(A)H\bigr)
\]
is concave in positive definite matrices.
\end{theorem}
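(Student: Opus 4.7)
The plan is to express $\varphi(A)$ as a partial maximum of a jointly concave function of two variables, and then invoke the fact (already cited in the excerpt from Carlen) that partial maximization of a jointly concave function yields a concave function in the remaining variable. Once set up this way, the concavity of $\varphi$ follows essentially for free from Theorem 1.1.

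The key ingredient is the Gibbs variational principle for the trace: for every positive operator $B$,
\[
\tr B = \max_{X>0}\tr\bigl(X\log B - X\log X + X\bigr),
\]
which is immediate from the nonnegativity of the ordinary relative entropy $S(X\mid B)$ (with equality at $X=B$). I would apply this with $B = \exp\bigl(L + H^*\log(A)H\bigr)$, so that $\log B = L + H^*\log(A)H$. Using cyclicity of the trace to rewrite $\tr\bigl(X H^*\log(A) H\bigr) = \tr\bigl(H X H^*\log A\bigr)$, and adding and subtracting $\tr A$, the bracketed expression inside the maximum becomes
\[
-\tr\bigl(X\log X - HXH^*\log A - X + A\bigr) + \tr(XL + A) = -S_{H^*}(X\mid A) + \tr(XL + A),
\]
which exhibits $\varphi(A)$ as $\max_{X>0}\{-S_{H^*}(X\mid A) + \tr(XL+A)\}$.

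Since $H^*$ is again a contraction, Theorem 1.1 applies and $S_{H^*}(X\mid A)$ is jointly convex in $(X,A)$; adding the affine term $\tr(XL+A)$ preserves joint concavity of the negative, so the function inside the maximum is jointly concave on pairs of positive definite matrices. Taking the partial supremum over $X$ at each fixed $A$ therefore yields a concave function of $A$, which is precisely $\varphi(A)$.

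The main obstacle is really just the bookkeeping: one must correctly identify the reduced relative entropy appearing in the variational expression as $S_{H^*}$ rather than $S_H$, and must be careful that $X$ plays the role of the first argument (the one carrying $X\log X$) while $A$ appears under $\log A$ sandwiched between $H$ and $H^*$. Once this identification is in place, all of the nontrivial analytic content has already been absorbed into Theorem 1.1 and the partial maximization lemma.
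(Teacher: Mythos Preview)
Your proposal is correct and follows essentially the same argument as the paper: apply the Gibbs variational principle with $B=\exp\bigl(L+H^*\log(A)H\bigr)$, rewrite the resulting expression as $-S_{H^*}(X\mid A)+\tr(XL+A)$, invoke Theorem~1.1 for the contraction $H^*$ to get joint concavity, and then use the partial-maximization lemma. You have also correctly flagged the only real subtlety, namely that the reduced relative entropy arising here is $S_{H^*}$ rather than $S_H$.
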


Finally, we recover a theorem of Lieb and Seiringer \cite[Theorem 3]{lieb:2005}.

\begin{corollary}\label{Concavity corollary 1}
Let $ L $ be a self-adjoint $ n\times n $ matrix and consider $ m\times n $ matrices  $ H_1,\dots,H_k $ with
\[
H_1^*H_1+\cdots+H_k^*H_k\le I_n
\]
where $ I_n $ denotes the $ n\times n $ unit matrix. The trace function
\begin{equation}\label{trace function of several variables}
\varphi(A_1,\dots,A_k)=\tr\exp\bigl(L+H^*_1 \log (A_1) H_1+\cdots+ H^*_k \log (A_k) H_k\bigr)
\end{equation}
is concave in $ k $-tuples of positive definite $ m\times m $ matrices.  
\end{corollary}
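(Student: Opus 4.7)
The plan is to reduce the multivariable statement to the single-variable
Theorem~\ref{second concave map} by means of a standard block-matrix construction.
The assumption $H_1^*H_1+\cdots+H_k^*H_k\le I_n$ is precisely what is needed
to make a suitably stacked matrix into a contraction.

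Concretely, I would form the positive definite $km\times km$ block diagonal matrix
\[
A=\begin{pmatrix} A_1 & & \\ & \ddots & \\ & & A_k\end{pmatrix}
\]
together with the $km\times n$ matrix obtained by vertically stacking the $H_i$:
\[
H=\begin{pmatrix} H_1\\ \vdots \\ H_k\end{pmatrix}.
\]
Then $H^*H=H_1^*H_1+\cdots+H_k^*H_k\le I_n$, so $H$ is a contraction in the
sense required by Theorem~\ref{second concave map}. Moreover, $\log A$ is the
block diagonal matrix with entries $\log A_i$, and a direct block computation yields
\[
H^*\log(A)H=H_1^*\log(A_1)H_1+\cdots+H_k^*\log(A_k)H_k.
\]

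Consequently, the function in (\ref{trace function of several variables}) can be
written as $\varphi(A_1,\dots,A_k)=\Psi(A)$ where
$\Psi(B)=\tr\exp\bigl(L+H^*\log(B)H\bigr)$ for positive definite $km\times km$
matrices $B$. Theorem~\ref{second concave map} gives concavity of $\Psi$, and
since the map $(A_1,\dots,A_k)\mapsto\mathrm{diag}(A_1,\dots,A_k)$ is linear
and sends $k$-tuples of positive definite $m\times m$ matrices into positive
definite $km\times km$ matrices, the composition $\varphi$ is jointly concave
in $(A_1,\dots,A_k)$.

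I do not foresee any serious obstacle. The only things to verify carefully are
the block computation identifying $H^*\log(A)H$ with the desired sum, the
contraction bound $H^*H\le I_n$, and the mild observation that concavity
descends through the affine embedding into the block-diagonal subspace of
positive definite matrices.
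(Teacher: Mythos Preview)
Your proposal is correct and follows essentially the same block-matrix reduction as the paper. The only cosmetic difference is that the paper pads the stacked column $H$ with extra zero columns (obtaining a $km\times kn$ matrix and a $kn\times kn$ block-diagonal $\hat L$), which introduces an additive constant $(k-1)n$ in the trace; your version avoids this padding by applying Theorem~\ref{second concave map} directly with the rectangular $km\times n$ contraction, which is marginally cleaner but mathematically identical.
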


\begin{proof}
We consider the $ k\times k $ block matrices
\[
\hat A=\begin{pmatrix}
     A_1     & 0     & \cdots   & 0\\
     0         & A_2 &             & 0\\
     \vdots &         & \ddots  & \vdots\\
     0        & 0       & \dots    & A_k
     \end{pmatrix},\,
\hat L=\begin{pmatrix}
     L     & 0     & \cdots   & 0\\
     0         & 0 &             & 0\\
     \vdots &         & \ddots  & \vdots\\
     0        & 0       & \dots    & 0
     \end{pmatrix},\,
\hat H=\begin{pmatrix}
     H_1    & 0        & \cdots & 0\\
     H_2    & 0        & \cdots & 0\\
     \vdots & \vdots & \ddots & \vdots\\
     H_k    & 0         & \cdots & 0
     \end{pmatrix}
\]
with zero matrices of suitable orders inserted and note that $ H $ is a contraction. Furthermore,
\[
\hat L+\hat H^*( \log \hat A) \hat H=\begin{pmatrix}
     \displaystyle L+\sum_{i=1}^k H_i^* (\log A_i) H_i    & 0       & \cdots   & 0\\
     0                                                     & 0        &  \cdots  & 0\\
     \vdots                                             & \vdots & \ddots   & \vdots\\
     0                                                     & 0        & \dots     & 0
     \end{pmatrix}.
\]
Thus
\[
\tr\exp\bigl( \hat L+\hat H^*\log (\hat A) \hat H\bigr)=\tr\exp\Bigl(L+\sum_{i=1}^k H^*_i \log (A_i) H_i\Bigr) +(k-1)n
\]
and the statement follows from the preceding Theorem.
\end{proof}

\subsection{Interpolation between GT and Jensen}

Note that the trace function in (\ref{trace function of several variables}) becomes positively homogeneous if
\[
H_1^*H_1+\cdots+H_k^*H_k= I_n\,. 
\]
Under this extra condition we may now recover the inequality
\begin{equation}\label{GT and Jensen}
\tr\exp\Bigl(L+\sum_{i=1}^k H^*_i B_i H_i\Bigr)\le\tr\exp\bigl(L) \sum_{i=1}^k H_i^* \exp (B_i) H_i
\end{equation}
for self-adjoint $ n\times n $ matrices $ L $ and self-adjoint $ m\times m $ matrices $ B_1,\dots, B_k $ by using Corollary~\ref{Concavity corollary 1} and following the steps in \cite[Theorem 4.2]{hansen:2015:2}. This is for $ m=n $ the same bound as obtained when all the matrices commute. We are thus allowed to estimate partition functions or the Helmhotz function in quantum statistical mechanics and obtain bounds on the same form as they appear in classical physics. 

Note that (\ref{GT and Jensen}) reduces to the Golden-Thompson inequality for $ k=1, $ $ m=n, $ and $ H_1=I_n $  and to convexity under the trace of the exponential function for $ L=0. $ The inequality may thus be considered as an interpolation inequality between Golden-Thompson's inequality and Jensen's trace inequality \cite[Theorem 2.4]{hansen:2003:2}, cf. also \cite{hansen:2003:3}. However, we cannot derive  (\ref{GT and Jensen}) from these special cases. If we first apply Golden-Thompson's inequality then we obtain
\[
\tr\exp\Bigl(L+\sum_{i=1}^k H^*_i B_i H_i\Bigr)\le\tr\exp\bigl(L)\exp\sum_{i=1}^k H_i^* B_i H_i
\]
but this inequality is insufficient to obtain (\ref{GT and Jensen}), since  $ L $ is arbitrary and the exponential function is not operator convex.

{\small


\vfill

\noindent Frank Hansen: Department of Mathematical Sciences. Copenhagen University. Email: frank.hansen@math.ku.dk

}

\end{document}